\newcounter{saetning} 
  \newtheorem{theo}[saetning]{Theorem}
\theoremstyle{definition} 
\theoremstyle{remark}
	\newtheorem{example}{Example}
\title{Mutual information matrices are not always positive semi-definite}
\begin{document}
\date{}
\author{Sune K. Jakobsen\thanks{School of Mathematical Sciences and School of Electronic Engineering \&
Computer Science, Queen Mary University of London, Mile End Road, London,
E1 4NS, UK. Email: S.K.Jakobsen@qmul.ac.uk.}}

  \maketitle 
  
    \section*{Abstract}
    
    For discrete random variables $X_1,\dots, X_n$ we construct an $n$ by $n$ matrix. In the $(i,j)$-entry we put the mutual information $I(X_i;X_j)$ between $X_i$ and $X_j$. In particular, in the $(i,i)$-entry we put the entropy $H(X_i)=I(X_i;X_i)$ of $X_i$. This matrix, called the mutual information matrix of $(X_1,\dots,X_n)$, has been conjectured to be positive semi-definite. In this note, we give counterexamples to the conjecture, and show that the conjecture holds for up to three random variables.

    \section*{Introduction}
    
    For a  random variable\footnote{In this note, all random variables will assumed to be discrete.} $X$ taking the values $x_1,\dots x_k$, Shannon defined the entropy of $X$ to be 
    \[H(X)=-\sum_{i=1}^k p_i \log(p_i).\]
     Here $p_i$ denotes the probability $\Pr(X=x_i)$ and, throughout this note $\log$ denotes the base $2$ logarithm. For random variables $X_1,\dots ,X_n$ with some joint distribution, we define the entropy of two random variables $X_i,X_j$ by
     \[H(X_i,X_j):= H(X_{i,j})=H((X_i,X_j)).\]
That is, we consider the random variable $(X_i,X_j)$ that is the tuple of $X_i$ and $X_j$ and take the entropy of that random variable. Similarly for larger set of random variables. The mutual information of two random variables is defined by
\[I(X_i;X_j)=H(X_i)+H(X_j)-H(X_i,X_j).\]   
In particular, $I(X_i;X_i)=H(X_i)+H(X_i)-H(X_i;X_i)=H(X_i)$.

    For a tuple of random variables $(X_1,\dots, X_n)$ we define its mutual information (MI) matrix to be the $n$ by $n$ matrix whose $(i,j)$ entry is given by $I(X_i;X_j)$. This matrix was claimed to be positive semi-definite in \cite{CL}, but this has never been proved. In this note, we will show some counterexamples, and a proof that the conjecture is true for all three-tuples.

    \section*{Examples and a theorem}

    \begin{example}

    Let $X_1$ and $X_2$ be independent random variables each uniformly distributed on $\{0,1\}$. Let $X_3=X_1 + X_2 \text{(mod} 2\text{)}$ and let $X_4=(X_1,X_2)$. Now the mutual information matrix for $(X_1,X_2,X_3,X_4)$ is
    \[M=\left(\begin{matrix}
    1& 0 & 0 & 1\\
    0&1&0&1\\
    0&0&1&1\\
    1&1&1&2    
    \end{matrix}\right).\]
    This matrix has $\left(1,1,1,\frac{1-\sqrt{13}}{2}\right)^T$ as an eigenvector, and the corresponding eigenvalue is $\frac{3-\sqrt{13}}{2}< -.30277<0$ so it is not positive semi-definite.
    \end{example}
   Such an example can easily be extended to examples with more random variables. For example we could take the $X_1,X_2,X_3,X_4$ from above together with random variables $Y_1,Y_2,\dots Y_{n-4}$ independent from each other and from the $X_i$'s. 
   
   Here is an example with a negative eigenvalue that has a much larger absolute value.
   
   \begin{example}
   Let $X_1,\dots ,X_n$ be independent random variables each uniformly distributed on $\{0,1\}$. For any non-empty set $S\subseteq\{1,\dots,n\}$ define $Y_S=\sum_{i\in S}X_i$ mod $2$. It is clear these variables are pairwise independent. Consider the set of all the $Y_S$'s together with the random variable, given by the tuple $(X_1,\dots,X_n)$. This is a set of $2^n$ random variables and the MI matrix is
    \[M=\left(\begin{matrix}
    1& 0 & 0&\dots  & 1\\
    0&1&0&\dots &1\\
    0&0&1&\dots &1\\
    \vdots &\vdots &\vdots &\ddots & \vdots \\
    1&1&1&\vdots &n    
    \end{matrix}\right).\]
    We see that $(1,\,  \dots ,\, 1,\, x_n)^T$ with $x_n=\frac{n-1-\sqrt{(n-1)^2+4(2^n-1)}}{2}$ is an eigenvector and corresponding eigenvalue is $\frac{n+1-\sqrt{(n-1)^2+4(2^n-1)}}{2}$. The absolute value of the eigenvalues grows as $\Theta( 2^{n/2})$.
   \end{example}
   
   \begin{example}
   We know that for random variables $Y_1, \dots ,Y_n$ all constants $c\geq 0$ and all $\epsilon>0$ we can find random variables $Z_1,\dots,Z_n$ such that for each set $S\subseteq \{1,\dots n\}$ we have $|cH(Y_S)-H(Z_S)|<\epsilon$ \cite{ZY}. By using this in the above example, with $c=\frac{1}{n+1}$ we see that we can find random variables $Z_1,\dots Z_{2^n}$, such that $H(Z_1,\dots, Z_{2^n})\leq 1$ but minus the lowest eigenvalue of the mutual information matrix is $\Theta(\frac{ 2^{n/2}}{n})$. That is, even if we keep the total entropy of the random variables bounded, we can get arbitrarily low eigenvalues.
   \end{example}
   
   \begin{example}
  In some applications, there may be monotonicity constraints that would rule out the above examples:
  e.g. in the first example $X_3=X_1 + X_2 \text{(mod} 2\text{)}$, so if $X_1=0$ then $X_3$ is increasing in $X_2$ and if $X_1=1$, $X_3$ is decreasing in $X_2$. However, if we instead take $X_3=X_1+X_2$ the mutual information matrix becomes
       \[M=\left(\begin{matrix}
    1& 0 & 0.5 & 1\\
    0&1&0.5&1\\
    0.5&0.5&1.5&1.5\\
    1&1&1.5&2    
    \end{matrix}\right).\]
   
    which has an eigenvalue $\approx -0.11062$ and so is still not positive semi-definite. 
    \end{example}
    
    The following theorem shows that the conjecture is true for collections of up to $3$ random variables.
   
    \begin{theo}
    Let $M$ be the mutual information matrix of three random variables $X_1,X_2,X_3$. Then $M$ is positive semi-definite. 
    \end{theo}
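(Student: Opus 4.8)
The plan is to exhibit $M$ as a nonnegative combination of rank-one positive semi-definite matrices, built from the decomposition of entropies and mutual informations into the seven ``atoms'' of the information diagram of $(X_1,X_2,X_3)$. Since a sum of positive semi-definite matrices is positive semi-definite, the only thing that can go wrong is a negative coefficient, and for three variables exactly one atom---the triple interaction information---can be negative; the whole argument then reduces to absorbing that single negative term.

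Concretely, I would set $a=H(X_1\mid X_2,X_3)$, $b=H(X_2\mid X_1,X_3)$, $c=H(X_3\mid X_1,X_2)$, and $d=I(X_1;X_2\mid X_3)$, $e=I(X_1;X_3\mid X_2)$, $f=I(X_2;X_3\mid X_1)$, together with the interaction term $g=I(X_1;X_2)-I(X_1;X_2\mid X_3)$. Expanding each entry of $M$ in these quantities (for instance $H(X_1)=a+d+e+g$ and $I(X_1;X_2)=d+g$) gives
\[ M=\mathrm{diag}(a,b,c)+d\,u_{12}u_{12}^{T}+e\,u_{13}u_{13}^{T}+f\,u_{23}u_{23}^{T}+g\,\mathbf{1}\mathbf{1}^{T}, \]
where $u_{12}=(1,1,0)^{T}$, $u_{13}=(1,0,1)^{T}$, $u_{23}=(0,1,1)^{T}$ and $\mathbf{1}=(1,1,1)^{T}$. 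The point of this bookkeeping is that $a,b,c\ge 0$ (conditional entropies), $d,e,f\ge 0$ (conditional mutual informations), and $d+g,\,e+g,\,f+g\ge 0$ (they equal the ordinary pairwise mutual informations); only $g$ itself may be negative. These are all standard nonnegativity facts for discrete random variables, so no deep entropy inequality is needed.

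If $g\ge 0$ every coefficient above is nonnegative and $M$ is a sum of positive semi-definite matrices, so we are done. The crux is the case $g<0$, and here I would use the identity
\[ u_{12}u_{12}^{T}+u_{13}u_{13}^{T}+u_{23}u_{23}^{T}-\mathbf{1}\mathbf{1}^{T}=I_3, \]
a direct $3\times 3$ check. Writing $s=-g>0$, the constraints $d+g,e+g,f+g\ge 0$ say precisely that $d,e,f\ge s$, so I can reorganize
\[ M=\mathrm{diag}(a,b,c)+(d-s)u_{12}u_{12}^{T}+(e-s)u_{13}u_{13}^{T}+(f-s)u_{23}u_{23}^{T}+s\,I_3, \]
in which every term is now positive semi-definite; hence so is $M$.

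The main obstacle is really just spotting this decomposition: the ``obvious'' route via Sylvester's criterion stalls because, although the $1\times1$ minors are entropies and the $2\times2$ minors reduce to $I(X_i;X_j)^2\le H(X_i)H(X_j)$ (immediate from $I(X_i;X_j)\le\min(H(X_i),H(X_j))$), the $3\times3$ determinant is a cubic in the atoms whose sign is not transparent. Reorganizing $M$ around the interaction atom $g$ and exploiting the single clean identity above is what makes the negative-$g$ case tractable; once that is in hand the proof is immediate.
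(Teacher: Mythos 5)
Your proof is correct and is essentially the paper's own argument: the paper likewise decomposes $M$ via the information-diagram atoms into a nonnegative combination of $\mathrm{diag}(\cdot)$, $u_{ij}u_{ij}^{T}$ and $\mathbf{1}\mathbf{1}^{T}$ terms, and it handles a negative triple interaction by shifting that mass onto the three pairwise atoms so that it appears as an $aI_3$ summand --- which is exactly your identity $u_{12}u_{12}^{T}+u_{13}u_{13}^{T}+u_{23}u_{23}^{T}-\mathbf{1}\mathbf{1}^{T}=I_3$ in disguise, justified by the same nonnegativity facts ($I(X_i;X_j)\geq 0$ forcing $d+g,e+g,f+g\geq 0$). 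The only difference is presentational: the paper merges your two cases into a single parameterization of the diagram (setting $a=0$ when the interaction term is nonnegative and $b_7=0$ when it is negative), whereas you state the case split and the matrix identity explicitly.
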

    \begin{proof}
    Let $i,j,k$ be a permutation of $1,2,3$. We know that $I(X_i;X_j)\geq 0$, so $I(X_i;X_j|X_k)\geq I(X_i;X_j;X_k)$. So the information diagram (see \cite{RY} for an introduction to information diagrams) for $X_1,X_2,X_3$ can be written as in Figure \ref{fig:split} where $a$ and the $b_i$ are all non-negative. This can be achieved by setting $b_7=0$ if $I(X_1;X_2;X_3)\leq 0$ and $a=0$ if $I(X_1;X_2;X_3)\geq 0$.
    
    \begin{figure}
\centering
\begin{tikzpicture}[scale=0.8]
\draw (3,0) circle (3);
\draw (6,0) circle (3);
\draw (4.5,-2.5) circle (3);
\node at (1.5,0.3) {$b_1$};
\node at (4.5,1.1) {$a+b_2$};
\node at (7.5,0.3) {$b_3$};
\node at (4.5,-4) {$b_4$};
\node at (2.7,-2.1) {$a+b_5$};
\node at (6.3,-2.1) {$a+b_6$};
\node at (4.5,-1) {$-a+b_7$};
\node at (0,2.5) {$X_1$};
\node at (9,2.5) {$X_2$};
\node at (7,-5) {$X_3$};
\end{tikzpicture}
\caption{}
\label{fig:split}
\end{figure}

 The corresponding mutual information matrix is now
 \[M=a\left(\begin{matrix} 1&0&0\\0&1&0\\0&0&1\end{matrix}\right)+ b_1\left(\begin{matrix} 1&0&0\\0&0&0\\0&0&0\end{matrix}\right)+b_2\left(\begin{matrix} 1&1&0\\1&1&0\\0&0&0\end{matrix}\right)+\dots+b_7 \left(\begin{matrix} 1&1&1\\1&1&1\\1&1&1\end{matrix}\right).\]
    Thus $M$ can be written as a sum of positive semi-definite matrices, so it is itself positive semi-definite.

    \end{proof}
    
    \subsection*{Open problem}
    
    Empirically, D. Polani (personal communication) has observed that the mutual information matrix is positive semi-definite in many applications. It would be interesting to give a natural general sufficient condition that explains this phenomenon.

	 \end{document}